\renewcommand{\vec}[1]{\underline{#1}}
\newtheorem{thm}{Theorem}
\newtheorem{theorem}[thm]{Theorem}
\newtheorem{lemma}[thm]{Lemma}
\theoremstyle{definition}
\newcommand{\Z}{\mathbb{Z}}
\newcommand{\calP}{\mathcal{P}}
\newcommand{\F}{\mathcal{F}}
\newcommand{\gf}{{\mathrm{GF}}}
\newcommand{\im}{{\mathrm{Im}}}
\newcommand{\sq}{{\rm SQUARE}}
\def\qi#1 {\fbox {\footnote {\ }}\ \footnotetext { From Qi: {\color{red}#1}}}
\begin{document}

\title{Three New Families of Zero-difference Balanced Functions with Applications}

\author{Cunsheng Ding,\thanks{C. Ding is with the Department of Computer Science and Engineering, The Hong Kong University of Science and Technology, Clear Water Bay, Kowloon, Hong Kong, China (email: cding@ust.hk).}
Qi Wang,\thanks{Q. Wang is with the Department of Computer Science and Engineering, The Hong Kong University of Science and Technology, Clear Water Bay, Kowloon, Hong Kong, China (email: qiwang@ust.hk).}
and Maosheng Xiong \thanks{M. Xiong is with the Department of Mathematics, The Hong Kong University of Science and Technology, Clear Water Bay, Kowloon, Hong Kong (email: mamsxiong@ust.hk).}
}

\maketitle


\renewcommand{\thefootnote}{}




\begin{abstract}
Zero-difference balanced (ZDB) functions integrate a number of subjects in combinatorics and algebra, and have many applications
in coding theory, cryptography and communications engineering. In this paper, three new families of ZDB functions are presented. The first construction, inspired by the recent work \cite{Cai13}, gives ZDB functions defined on the abelian groups $(\gf(q_1) \times \cdots \times \gf(q_k), +)$ with new and flexible parameters. The other two constructions are based on $2$-cyclotomic cosets and yield ZDB functions on $\Z_n$ with new parameters. The parameters of optimal constant composition codes, optimal and perfect difference systems of sets obtained from these new families of ZDB functions are also summarized.
\end{abstract}

\begin{keywords}
Constant composition codes, cyclotomic cosets, difference system of sets, generalized cyclotomy, zero-difference balanced functions.
\end{keywords}

\section{Introduction}\label{sec-into}

Let $(A,+)$ and $(B,+)$ be two abelian groups with orders $n$ and $\ell$ respectively. A function $f$ from $A$ to $B$ is called {\em zero-difference balanced} (ZDB for short) if
\[|\{x \in A: f(x+a)-f(x)=0\}|=\lambda, \]
for every nonzero $a \in A$, where $\lambda$ is a non-negative integer. Let $\im(f) = \{b_0, b_1, \ldots, b_{\bar{\ell}-1}\} \subseteq B$ denote the image set of $f$ and $\bar{\ell} = |\im(f)|$. Define $A_i := \{ x \in A: f(x) = b_i \}$ and $\tau_i = | A_i |$ for $ 0 \leq i \leq \bar{\ell}-1$. Let $\calP$ be the set of all the preimage sets, i.e.,  $\calP=\{A_0, A_1, \ldots, A_{\bar{\ell}-1} \}$. Clearly, $\calP$ constitutes a partition of $A$. Furthermore, by the ZDB property, for
each $0 \leq i \leq \bar{\ell}-1$, the list of differences $a - a'$ with $a, a' \in A_i$ and $a \ne a'$, covers all nonzero elements of $A$ exactly $\lambda$ times. In this case, the set $\calP$ is called an $(n, \{\tau_0, \tau_1, \ldots, \tau_{\bar{\ell}-1}\}, \lambda)$-{\em partitioned difference family} (PDF). Because of the connection with PDF, each ZDB function can be identified with parameters $(n, \{\tau_0,
\tau_1, \ldots, \tau_{\bar{\ell}-1} \}, \lambda)$ (all these parameters are needed in some applications, see Section~\ref{sec-app}). We also associate every ZDB function with the three parameters $(n, \bar{\ell}, \lambda)$ since in some cases the parameters $\{\tau_0, \tau_1, \ldots, \tau_{\bar{\ell}-1}\}$ may not be available. 

Zero-difference balanced functions were first introduced by Ding in constructing optimal constant composition codes~\cite{Ding08} and optimal and perfect difference systems of sets~\cite{Ding09}. In the literature, perfect nonlinear functions~\cite{Hou,Feng,Ny91, ZGY,ZKW} and difference balanced functions~\cite{PW13,Zhou12} are special types of ZDB functions. ZDB functions unify different subjects in combinatorics, algebra and finite geometry, and they have found applications not
only in these three areas but also in communications, coding theory and cryptography. Due to their applications, ZDB functions have received a lot of attention recently. Besides constant composition codes and difference systems of sets, they can also be employed to construct optimal constant weight codes~\cite{WZ12,Zhou12} and optimal sets of frequency hopping sequences~\cite{GFM06,GMY09,WZ12}. In Table \ref{1:zdb} below, we summarize some known ZDB functions with parameters $(n,\bar{\ell},\lambda)$, and also the parameters $\{ \tau_0, \tau_1, \ldots,
\tau_{\bar{\ell}-1} \}$ if they are available. In tables below all the variables are positive integers, $q$ is always a prime power, and $p_i$'s are primes.

\begin{table}[ht]
 \caption{Some known ZDB functions with parameters $(n,\bar{\ell}, \lambda)$}\label{1:zdb}

\begin{center}{
\begin{tabular}{|c|c|c|c|}
  \hline
  \multicolumn{2}{|c|}{Parameters} & \multirow{2}{*}{Constraints} & \multirow{2}{*}{References} \\
  $(n, \bar{\ell}, \lambda)$ & $\{\tau_0, \tau_1, \ldots, \tau_{\bar{\ell}-1}\}$ &  & \\
  \hline
  \hline
  $(p^r,p^s, p^{r-s})$ &  & $p$ is a prime, $0 \le s \le r$ & \cite{Ny91}  \\
  \hline
  $(p^2,p,p)$  & $\{2p-1, p-1, \ldots, p-1\}$  & $p$ is an odd prime & \cite{Ding09}  \\
  \hline
  $\left(\frac{q^r-1}{N}, q, \frac{q^{r-1}-1}{N}\right)$ &  & $N|(q-1)$, and $\gcd(N,r)=1$ & \cite{Ding08,Ding09} \\
  \hline
  $(q^2+1, q, q+1)$ &    &  $q=2^s, s \ge 1$ & \cite{Ding08} \\
  \hline
  $(q-1, d, \frac{q-d}{d})$ & $\{ \frac{q}{d} - 1, \frac{q}{d}, \ldots, \frac{q}{d} \}$  &  $d|q$ & \cite{DT12} \\
  \hline
  $(q^r-1, q^s, q^{r-s}-1)$ &  &  $1 \le s \le r$ & \cite{Zhou12} \\
  \hline
  $\left(t\frac{q^r-1}{N}, q^s, t\frac{q^{r-s}-1}{N}\right)$ &  & $\begin{array}{c} N|(q-1), \mbox{ and } \gcd(N,r)=1 \\ 1 \le t \le N,\ 1 \le s \le r \end{array}$ & \cite{Zhou12} \\
  \hline
  $(n, \frac{n + e -1}{e}, e-1)$ & $\{ 1,e,\ldots, e \}$  &  $\begin{array}{c} n=p_1^{m_1}p_2^{m_2} \cdots p_k^{m_k}, 2 <p_1<p_2<\cdots<p_k\\
  \mbox{ and } e|(p_i-1) \mbox{ for } 1 \le i \le k
  \end{array}$ & \cite{Cai13} \\
  \hline
  $(n, \frac{n + e -1}{e}, e-1)$ & $\{1,e, \ldots, e\}$  &  $\begin{array}{c} n=p_1^{m_1}p_2^{m_2} \cdots p_k^{m_k}, p_1<p_2<\cdots<p_k\\
  \mbox{ and } e|(p_i^{m_i}-1) \mbox{ for } 1 \le i \le k
\end{array}$ & Theorem~\ref{1:main} \\
  \hline
  $(2^m-1, \frac{2^m + m - 2}{m}, m-1)$ &  $\{ 1, m, \ldots, m \}$    &   $m$ is a prime & Theorem~\ref{2:zdb} \\
  \hline
  $(2^m-1, \frac{2^{m-1} + m - 1}{m}, 2m-1)$ & $\{1, 2m, \ldots, 2m \}$  &   $m$ is an odd prime & Theorem~\ref{3:zdb} \\
  \hline
\end{tabular}}
\end{center}
\end{table}

Very recently, in~\cite{Cai13},  Cai, Zeng, Helleseth, Tang, and Yang constructed $(n, (n + e -1)/e, e-1)$-ZDB functions on $(\Z_n, +)$, where $n$ is odd and has the canonical factorization
\[n=p_1^{m_1} p_2^{m_2} \cdots p_k^{m_k}, \quad 2<p_1<p_2<\cdots<p_k,\]
and $e>1$ such that $e|(p_i-1)$ for all $1 \le i \le k$. Their construction employs the tool of generalized cyclotomy in the rings $\Z_n$, and generates many ZDB functions with new parameters.

In this paper, inspired by the idea of \cite{Cai13} and utilizing generalized cyclotomy in the rings $\gf(p_1^{m_1}) \times \cdots \times \gf(p_k^{m_k})$, we construct $(n, (n+e-1)/e, e-1)$-ZDB functions on the abelian groups $(\gf(p_1^{m_1}) \times \cdots \times \gf(p_k^{m_k}), +)$, with
\[n=p_1^{m_1} p_2^{m_2} \cdots p_k^{m_k}, \quad p_1<p_2<\cdots<p_k,\]
and $e>1$ such that $e|(p_i^{m_i}-1)$ for all $1 \le i \le k$.

One aspect of difference between \cite{Cai13} and our construction is that the groups $\Z_n$ are cyclic, while in our case the groups $(\gf(p_1^{m_1}) \times \cdots \times \gf(p_k^{m_k}), +)$ are not cyclic in general. This is an advantage of \cite{Cai13} over our construction, because ZDB functions on cyclic groups have more applications than those on noncyclic groups. On the other hand, our construction provides many new and more flexible parameters compared with \cite{Cai13}, since in our
construction $n$ can be even and the requirement $e|(p_i^{m_i}-1)$ gives more flexibility. For example, take $n=7^2 \cdot 13^2$, then $\gcd(7^2-1,13^2-1)=24$. So for $e>1$ and $e|24$, we can take $e=2,3,6,4,8,12,24$. In comparison with the construction in~\cite{Cai13}, however, since $\gcd(7-1,13-1)=6$, the requirement $e>1$ and $e|6$ only allows us to take $e=2,3,6$. The two constructions may overlap only when $m_1=\ldots=m_k=1$, i.e., when $n$ is a square-free positive integer.

In addition to this construction, we propose two other families of ZDB functions on the cyclic groups $\Z_n$ with new parameters based on $2$-cyclotomic cosets, where $n=2^m-1$ for any prime $m$. It may be noted that these constructions cannot be generalized to $p$-cyclotomic cosets on $\Z_n$ where $n=p^m-1$ for $p>2$.

The rest of the paper is organized as follows. In Section~\ref{sec-first}, we present the first construction of ZDB functions
on the groups $(\gf(p_1^{m_1}) \times \cdots \times \gf(p_k^{m_k}), +)$. In Section~\ref{sec-2cyclo}, we describe the second and third
construction of ZDB functions on the groups $(\Z_n, +)$.  In Section~\ref{sec-app}, we summarize some applications of the
ZDB functions. In Section~\ref{sec-con}, we make some remarks.

\section{The first family of $(n, (n+e-1)/e, e-1)$-ZDB functions on the Abelian groups $(\gf(p_1^{m_1}) \times \cdots \times \gf(p_k^{m_k}), +)$}\label{sec-first}

The first construction of ZDB functions is described as follows. Let $q_1,\ldots,q_k$ be distinct prime powers and let
\[n=q_1 q_2 \cdots q_k.\]
For each $i$, let $\gf(q_i)$ be a finite field of order $q_i$ and $g_i$ be a generator of the multiplicative group $\gf(q_i)^*:=\gf(q_i) \setminus \{0\}$. Consider the ring
\[A=\gf(q_1) \times \gf(q_2) \times \cdots \times \gf(q_k). \]
For each non-empty subset $I \subseteq \{1,\ldots,k\}$, let
\[A_{I}=\left\{\vec{x}=(x_1,\ldots,x_k) \in A: \begin{array}{lc}
x_i \in \gf(q_i)^*,& \mbox{ if } i \in I,\\
x_i =0,& \mbox{ if } i \notin I.
\end{array}\right\}\]
Without confusion we may identify $A_I$ with $\prod_{i \in I}\gf(q_i)^*$. Then $A_I$ is a multiplicative group with identity $\vec{1}=(1,\ldots,1)$.

For each $e>1$ with $e|(q_i-1)$ for all $1 \le i \le k$, let
\[q_i-1=e \cdot f_i, \quad 1 \le i \le k\]
and let $\vec{g_I} \in A_I$ be given by
\[\vec{g_I}=\left(g_i^{f_i}\right)_{i \in I}.\]
Since the order of $g_i$ is $q_i-1$ for each $i$, the order of $\vec{g_I} \in A_I$ is $e$. Let $D_I \subseteq A_I$ be the cyclic subgroup generated by $\vec{g_I}$, then $|D_I|=e$. Clearly,
\[\left|\left(D_{I} - \vec{1} \right) \bigcap A_{I}\right|=e-1.\]
We can decompose $A_I$ into a disjoint union of left cosets of $D_I$ as
\begin{eqnarray} \label{1:ai}
A_I=\coprod_{\alpha_I \in R_I} \alpha_I D_I,
\end{eqnarray}
where $R_I \subseteq A_I$ is a fixed set of representatives for $A_I/D_I$. We find that \[|R_I|=|A_I|/|D_I|=\frac{1}{e} \cdot \prod_{i \in I}(q_i-1). \]

Let
\[\mathcal{S}:=\biggl\{\alpha_I D_I: \forall \,\, \alpha_I \in R_I, \forall \,\, \emptyset \ne I \subseteq \{1,\ldots,k\}  \biggr\} \bigcup \bigl\{\{\vec{0}\}\bigr\}. \]
The set $\mathcal{S}$ has order $\frac{n-1}{e}+1$. Let $\eta(\cdot)$ be any bijection from $\mathcal{S}$ to $\Z_{\frac{n-1}{e}+1}$. We define $f: A \to \Z_{\frac{n-1}{e}+1}$ by
\[f(\vec{x})=\left\{\begin{array}{lll}
\eta\left(\alpha_I D_I\right) &:& \mbox{ if } \vec{x} \in \alpha_I D_I \mbox{ for some } \alpha_I \in R_I, \emptyset \ne I \subseteq \{1,\ldots,k\},\\
\eta\left(\{\vec{0}\}\right) &:& \mbox{ if } \vec{x}=\vec{0}=(0,\ldots,0).
\end{array}\right.\]
This is well-defined, because each non-zero vector $\vec{x} \in A$ belongs to some $A_I$ for a unique non-empty $I$, and by the decomposition (\ref{1:ai}), $\vec{x}$ belongs to some $\alpha_I D_I$ for a unique $\alpha_I \in R_I$. Since $|\alpha_I D_i| = e$ for each $\alpha_I \in R_I$, it is easily seen that the sizes of the preimage sets of $f$ are $\{1, e, \ldots, e\}$.

\begin{thm}\label{1:main}
  The function $f$ defined above is an $(n, \frac{n + e - 1}{e}, e-1)$-ZDB function from $A$ onto $\Z_{\frac{n-1}{e}+1}$.
\end{thm}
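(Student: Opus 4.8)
The plan is to verify directly the defining identity of a ZDB function. The preimage sets of $f$ are exactly $\{\vec{0}\}$ together with all the cosets $\alpha_I D_I$ (over $\emptyset\ne I\subseteq\{1,\dots,k\}$ and $\alpha_I\in R_I$), so for a fixed nonzero $\vec{a}\in A$,
\[
\bigl|\{\vec{x}\in A:\ f(\vec{x}+\vec{a})=f(\vec{x})\}\bigr|=\sum_{\emptyset\ne I\subseteq\{1,\dots,k\}}\ \sum_{\alpha_I\in R_I}\bigl|\{\vec{x}\in\alpha_I D_I:\ \vec{x}+\vec{a}\in\alpha_I D_I\}\bigr|,
\]
since the class $\{\vec{0}\}$ contributes nothing once $\vec{a}\ne\vec{0}$. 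Thus everything reduces to showing that this double sum equals $e-1$ for every $\vec{a}\ne\vec{0}$. The remaining assertions are free: $\bar\ell=(n+e-1)/e$ and surjectivity onto $\Z_{(n-1)/e+1}$ follow because $|\mathcal{S}|=1+\sum_{\emptyset\ne I}|R_I|=1+\tfrac1e(\prod_iq_i-1)=(n-1)/e+1$ and $\eta$ is a bijection, while the preimage sizes $\{1,e,\dots,e\}$ were already noted in the construction.

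The key step is to identify which cosets can contribute at all. Write $J=\mathrm{supp}(\vec{a})\ne\emptyset$, and write $\vec{1}_I$ for the identity of the group $A_I$ (so $\vec{1}_I$ has a $1$ in each coordinate indexed by $I$ and a $0$ elsewhere). Suppose $\vec{x}$ and $\vec{x}+\vec{a}$ both lie in a common coset $\alpha_I D_I$, say $\vec{x}=\alpha_I\vec{d}$ and $\vec{x}+\vec{a}=\alpha_I\vec{d}'$ with $\vec{d},\vec{d}'\in D_I$. Since $D_I$ is abelian,
\[
\vec{a}=\alpha_I(\vec{d}'-\vec{d})=\vec{x}\,(\vec{d}''-\vec{1}_I),\qquad \vec{d}'':=\vec{d}^{\,-1}\vec{d}'\in D_I.
\]
Because $\vec{a}\ne\vec{0}$ we must have $\vec{d}''\ne\vec{1}_I$; writing $\vec{d}''=(g_i^{f_i j})_{i\in I}$ with $j\not\equiv0\pmod e$, every coordinate of $\vec{d}''$ indexed by $I$ differs from $1$ (as $g_i^{f_ij}=1$ would force $e\mid j$, since $q_i-1=ef_i$), i.e.\ $\vec{d}''-\vec{1}_I\in A_I$. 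Comparing supports, $\mathrm{supp}(\vec{a})=\mathrm{supp}\!\left(\vec{x}(\vec{d}''-\vec{1}_I)\right)=I$, so only the index set $I=J$ contributes. For that $I$, since $A_J=\coprod_{\alpha_J\in R_J}\alpha_J D_J$ is a partition, the inner double sum collapses to $\bigl|\{\vec{x}\in A_J:\ (\vec{x}+\vec{a})\,\vec{x}^{\,-1}\in D_J\}\bigr|$ (and $(\vec{x}+\vec{a})\vec{x}^{\,-1}\in D_J$ automatically forces $\vec{x}+\vec{a}\in A_J$, as $\vec{x}+\vec{a}$ has support in $J$).

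To finish, I would compute this last count by two changes of variable. First, $(\vec{x}+\vec{a})\vec{x}^{\,-1}=\vec{1}_J+\vec{a}\,\vec{x}^{\,-1}$; as $\vec{a}\in A_J$ is fixed, the map $\vec{x}\mapsto\vec{u}:=\vec{a}\,\vec{x}^{\,-1}$ is a bijection of $A_J$, turning the condition into $\vec{1}_J+\vec{u}\in D_J$. Second, put $\vec{v}=\vec{1}_J+\vec{u}$, so $\vec{u}=\vec{v}-\vec{1}_J$, and the constraint $\vec{u}\in A_J$ says exactly that $\vec{v}$ differs from $\vec{1}_J$ in every coordinate indexed by $J$, which (again by $q_i-1=ef_i$ and $\vec{g_I}$ having order $e$) is equivalent to $\vec{v}\ne\vec{1}_J$. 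Hence the count equals $|\{\vec{v}\in D_J:\vec{v}\ne\vec{1}_J\}|=|D_J|-1=e-1$, completing the proof. I expect the only real obstacle to be the support bookkeeping in the middle paragraph: unlike the cyclic setting of \cite{Cai13}, here $A$ breaks up over all subsets $I$, and one must rule out mixed contributions from cosets with $I\ne\mathrm{supp}(\vec{a})$; the identity $\vec{a}=\vec{x}(\vec{d}''-\vec{1}_I)$ together with $\vec{g_I}$ having order exactly $e$ is precisely what forces the collapse to $I=J$, after which the argument is the familiar two-step substitution.
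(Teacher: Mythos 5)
Your proof is correct and follows essentially the same route as the paper's: reduce the count to the single support set $J=\mathrm{supp}(\vec{a})$ using the fact that $\vec{g}_I$ has order exactly $e$ (so $\vec{d}-\vec{1}_I\in A_I$ for $\vec{1}_I\ne\vec{d}\in D_I$), and then convert the additive condition into the multiplicative count $\left|\left(D_J-\vec{1}_J\right)\cap A_J\right|=e-1$ via the inversion bijection. The only differences are cosmetic: you collapse the sum over coset representatives into one count over $A_J$ with the map $\vec{x}\mapsto\vec{a}\,\vec{x}^{-1}$, whereas the paper keeps the sum over $\alpha\in R_J$ and lets $-\vec{a}\alpha^{-1}$ run over representatives; and you even spell out the order-$e$ detail that the paper labels ``clearly.''
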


\begin{proof} For each $\vec{a}=(a_i)_i \in A \setminus \{\vec{0}\}$, we may assume that $\vec{a} \in A_I$ for a unique non-empty set $I \subseteq \{1,\ldots,k\}$. By definition, $f(\vec{x}+\vec{a})=f(\vec{x})$ if and only if $\vec{x}$ and $\vec{x}+\vec{a}$ belong to the same set in $\mathcal{S}$. This implies that $\vec{x} \ne \vec{0}$. Say $\vec{x}=(x_i)_i$ and $\vec{x}+\vec{a}=(x_i+a_i)_i$ belong to the same set $\alpha D_{I'}$ for some $\alpha=(\alpha_i)_i \in
  R_{I'}$ and some non-empty set $I'$. This means that there exist $0 \le t,s \le e-1$ such that
\[\begin{array}{ll}
x_i=\alpha_i g_i^{f_it}, \quad x_i+a_i=\alpha_i g_i^{f_is} & \forall \, i \in I', \\
x_i=0, \quad x_i+a_i=0 & \forall \, i \notin I'.
\end{array}\]
In particular, we have $a_i=0$ for all $i \notin I'$, hence $I \subseteq I'$. If $I' \ne I$, then we can find $i_1 \in I' \setminus I$, and $a_{i_1}=0$ since $\vec{a} \in A_I$, and
\[x_{i_1}=\alpha_{i_1}g_{i_1}^{f_{i_1}t}= x_{i_1}+a_{i_1}=\alpha_{i_1}g^{f_{i_1}s}_{i_1}.  \]
This implies that
\[\alpha_{i_1}g_{i_1}^{f_{i_1}t} =\alpha_{i_1}g^{f_{i_1}s}_{i_1} \Longrightarrow t=s. \]
Thus we find $a_i=0$ for all $i \in I'$. We already know that $a_i=0$ for all $i \notin I'$. This means $\vec{a}=\vec{0}$, a contradiction. Therefore we must have $I'=I$ where $\vec{a} \in A_{I}$. So
\begin{eqnarray*}
  \lefteqn{|\{\vec{x} \in A: f(\vec{x}+\vec{a}) f(\vec{x})\}| } \\
  & = & \sum_{\emptyset \ne I' \subseteq \{1,\ldots,k\}} |\{\vec{x} \in A_{I'}: f(\vec{x}+\vec{a})=f(\vec{x})\}| \\
& =  & |\{\vec{x} \in A_{I}: f(\vec{x}+\vec{a})=f(\vec{x})\}| \\
& = & \sum_{\alpha \in R_{I}}\left|\alpha D_{I} \bigcap \left(\alpha D_{I}-\vec{a}\right)\right|.
\end{eqnarray*}

Every element in the set $\alpha D_{I} \bigcap \left(\alpha D_{I}-\vec{a}\right)$ corresponds one-to-one to unique $\vec{x},\vec{y} \in D_{I}$ such that $\alpha \vec{x}-\vec{a}=\alpha \vec{y}$, or equivalently $\vec{1}-\vec{a} \alpha^{-1} \vec{x}^{-1}= \vec{x}^{-1} \vec{y}$ with $\vec{x}^{-1}, \vec{x}^{-1}\vec{y} \in D_{I}$. Here for $\vec{x} \in A_{I}$, $\vec{x}^{-1}$ denotes the multiplicative inverse of $\vec{x}$ in $A_{I}$. So we have
\begin{eqnarray*}
\left|\alpha D_{I} \bigcap \left(\alpha D_{I}-\vec{a}\right)\right|=\left|D_{I} \bigcap \left( \vec{1}-\vec{a} \alpha^{-1}D_{I}\right)\right|=\left|\left(D_{I}-\vec{1}\right) \bigcap \left(-\vec{a} \alpha^{-1}\right) D_{I}\right|.
\end{eqnarray*}
It is easy to observe that as $\alpha$ runs over $R_{I}$, a set of representatives for $A_{I}/D_{I}$, the element $-\vec{a} \alpha^{-1}$ will also run over a set of representatives for $A_{I}/D_{I}$. Therefore we obtain
\begin{eqnarray*}
|\{\vec{x} \in A: f(\vec{x}+\vec{a})=f(\vec{x})\}|&=&\sum_{\alpha \in R_{I}}\left|\left(D_{I}-\vec{1}\right) \bigcap \alpha D_{I}\right|=\left|\left(D_{I}-\vec{1}\right) \bigcap A_{I}\right|=e-1.
\end{eqnarray*}
This completes the proof of Theorem \ref{1:main}.
\end{proof}

\section{Two more families of ZDB functions on $(\Z_n,+)$ from $2$-cyclotomic cosets modulo $n$}\label{sec-2cyclo}

In this section, employing $2$-cyclotomic cosets modulo $n=2^m-1$, we present two families of ZDB functions on $(\Z_n,+)$  with new parameters. The ZDB functions in one family have parameters $(2^m-1, (2^m + m - 2)/m, m-1)$, and those in the
other family have parameters $(2^m - 1, (2^{m-1} + m - 1)/m, 2m - 1)$. Furthermore, the parameters $\{ \tau_0, \tau_1, \ldots, \tau_{\bar{\ell}-1} \}$, i.e., the sizes of the preimage sets,  of the ZDB functions are also determined.

Let $n=2^m-1$. The $2$-cyclotomic coset modulo $n$ containing $i$ is defined by
$$
\{i, i \times 2 \bmod{n}, i \times 2^2 \bmod{n}, \cdots, i \times 2^{\ell_i} \bmod{n}\} \subset \Z_n,
$$
where $\ell_i$ is the least positive integer such that $i \equiv i 2^{\ell_i} \pmod{n}$, and is called the size of this
$2$-cyclotomic coset. The leader of a $2$-cyclotomic coset modulo $n$ is the least integer in the $2$-cyclotomic
coset. Clearly, all the $2$-cyclotomic cosets modulo $n$ form a partition of $\Z_n$. It is noted that $n=2^m-1$ may not be a prime when $m$ is a prime. For example, $n=2^{11}-1=23 \times 89$.

\subsection{The family of $(2^m-1, (2^m + m - 2)/m, m-1)$-ZDB functions on $(\Z_n, +)$}\label{sec-second}

Let $m$ be a prime, and let $n=2^m-1$. Since $m$ is a prime, every nonzero $2$-cyclotomic coset has size $m$, and the total number of nonzero $2$-cyclotomic cosets modulo $n$ is equal to $(2^m-2)/m$.
Let $\Gamma_m$ denote the set of all $2$-cyclotomic coset leaders. Then
$$
|\Gamma_m|=1+ \frac{2^m-2}{m} = \frac{2^m + m - 2}{m}.
$$

We now define a function $f$ from $(\Z_n, +)$ to itself by
$$
f(x)=i_x,
$$
where $i_x$ is the coset leader of the $2$-cyclotomic coset containing $x$.
Since every nonzero $2$-cyclotomic coset has $m$ elements modulo $2^m - 1$, by definition, the sizes of the preimage sets of $f$ form the set $\{1, m, \ldots, m\}$.

\begin{theorem} \label{2:zdb}
  Let $m$ be a prime. Then the function $f$ defined above is a $(2^m-1, \frac{2^m + m - 2}{m}, m-1)$-ZDB function on $(\Z_n, +)$.
\end{theorem}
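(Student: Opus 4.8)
The plan is to show that for every nonzero $a \in \Z_n$, the equation $f(x+a) = f(x)$ has exactly $m-1$ solutions $x$. By definition, $f(x+a) = f(x)$ means $x$ and $x+a$ lie in the same $2$-cyclotomic coset modulo $n$, i.e., there exists $j$ with $0 \le j \le m-1$ such that $x + a \equiv 2^j x \pmod{n}$. (Since $m$ is prime, every nonzero coset has exactly $m$ elements, and the residue $2^j \bmod n$ ranges over a group of order $m$ as $j$ runs through $0, \ldots, m-1$; but note $j=0$ would force $a \equiv 0$, so we may restrict to $1 \le j \le m-1$, and also $x = 0$ is impossible.) Thus the count is
\[
|\{x \in \Z_n : f(x+a) = f(x)\}| = \left| \bigcup_{j=1}^{m-1} \{x \in \Z_n : (2^j - 1) x \equiv a \pmod{n}\} \right|.
\]
The key structural fact I would establish is that these $m-1$ solution sets are pairwise disjoint and each is a singleton, which immediately gives the count $m-1$.

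For each fixed $j$ with $1 \le j \le m-1$, I would argue that $\gcd(2^j - 1, 2^m - 1) = 2^{\gcd(j,m)} - 1 = 2^1 - 1 = 1$ since $m$ is prime and $1 \le j \le m-1$ forces $\gcd(j,m) = 1$. Hence $2^j - 1$ is a unit modulo $n$, so the linear congruence $(2^j-1)x \equiv a \pmod n$ has a unique solution $x_j \in \Z_n$; moreover $x_j \ne 0$ because $a \ne 0$. This handles "each set is a singleton." For disjointness, suppose $x_j = x_{j'}$ for some $1 \le j < j' \le m-1$; then $(2^j - 1)x_j \equiv a \equiv (2^{j'} - 1)x_j \pmod n$, so $(2^{j'} - 2^j) x_j \equiv 0$, i.e., $2^j(2^{j'-j} - 1) x_j \equiv 0 \pmod n$. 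Since $\gcd(2, n) = 1$ and $\gcd(2^{j'-j}-1, n) = 1$ (again as $1 \le j'-j \le m-1$ and $m$ is prime), we get $x_j \equiv 0$, contradicting $x_j \ne 0$. Therefore the $m-1$ singletons are distinct, and the union has exactly $m-1$ elements.

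The main obstacle — really the only subtlety — is being careful that $j=0$ and $x=0$ contribute nothing (the former because it forces $a=0$, the latter because $f(0)=0$ is the unique fixed point of the doubling action and $0+a \ne 0$ would then have to lie in the zero coset, forcing $a=0$), and that the relevant gcd computations genuinely reduce to $1$ precisely because $m$ is prime. The remark in the paper that this construction fails for $p > 2$ is consistent with this: there $\gcd(p^j - 1, p^m - 1)$ need not be controlled so cleanly, and the multiplicities would change. Once these bookkeeping points are settled, the proof is just the unit-and-disjointness argument above; I would also cross-check the claimed preimage-size multiset $\{1, m, \ldots, m\}$, which is immediate since the zero coset is $\{0\}$ and all other cosets have size $m$, and this is already noted before the theorem statement.
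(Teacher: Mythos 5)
Your proof is correct and follows essentially the same route as the paper: translate $f(x+a)=f(x)$ into $(2^j-1)x\equiv a \pmod{2^m-1}$ for some $1\le j\le m-1$, use $\gcd(2^j-1,2^m-1)=2^{\gcd(j,m)}-1=1$ (valid since $m$ is prime) to get a unique solution for each $j$, and then show the $m-1$ solutions are distinct. Your disjointness step, which observes that $2^{j'}-2^j=2^j(2^{j'-j}-1)$ is a unit modulo $2^m-1$ so coinciding solutions would force $a\equiv 0$, is in fact slightly more careful than the paper's terse remark that the elements $2^k-1$ are pairwise distinct, since $a$ may be a zero divisor when $2^m-1$ is composite.
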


\begin{proof} Note that $|\im(f)|=|\Gamma_m| = \frac{2^m + m - 2}{m}$. It suffices to prove that for every $a \not \equiv 0 \pmod{2^m-1}$, the number of $x$ with $1 \le x \le 2^m-1$ such that $x+a$ and $x$ belong to the same $2$-cyclotomic set is always $m-1$. The existence of such an $x$ means that there is an integer $k$ with $1 \le k \le m - 1$, such that
\[x+a \equiv 2^kx \pmod{2^m-1}, \]
or equivalently,
\begin{eqnarray*} (2^k-1)x \equiv a \pmod{2^m-1}. \end{eqnarray*}
Since $m$ is a prime and $k<m$, we have
\[\gcd(2^k-1,2^m-1)=2^{\gcd(k,m)}-1 = 1.  \]
We denote by $\overline{2^k-1}$ the multiplicative inverse of $2^k-1$ modulo $2^m-1$. Thus,
\[x \equiv (\overline{2^k-1}) \cdot a \pmod{2^m-1}, \]
and this holds for all $1 \le k \le m-1$. It is also clear that $2^k-1 \not \equiv 2^l-1 \pmod{2^m-1}$ for $1 \le k \ne l \le m-1$, hence the number of such $x$ is always $m-1$. This completes the proof of Theorem \ref{2:zdb}.
\end{proof}

\subsection{The family of $(2^m-1, (2^{m-1} + m - 1)/m, 2m-1)$-ZDB functions on $(\Z_n, +)$}\label{sec-third}

Let $m$ be an odd prime and let $n=2^m-1$. Same as in Section~\ref{sec-second}, let $\Gamma_m$ denote the set of all $2$-cyclotomic coset leaders and further $\Pi_m$ denote the set of all $2$-cyclotomic cosets modulo $n$. Since $m$ is prime, every nonzero $2$-cyclotomic cosets modulo $n$ has the size $m$ and $|\Gamma_m|=1+ (2^m - 2)/m $.
Define
$$
\Delta_m=\{B \cup (-B): B \in \Pi_m  \} ,
$$
where $-B=\{n-i: i \in B\}$. Similarly, the leader of any $B \cup (-B)$ is the least integer in this set. It is easy to prove that $B$ and $-B$ are disjoint for each $ \{0 \} \ne B \in \Pi_m$, and hence
$$
|\Delta_m|= 1 + \frac{2^{m-1}-1}{m} = \frac{2^{m-1} + m - 1}{m}.
$$

We now define a function $g$ from $(\Z_n, +)$ to itself by
$$
g(x)=j_x,
$$
where $j_x$ is the leader of the set $B \cup (-B)$  containing $x$.
Since every nonzero set $B \cup (-B)$ has $2m$ elements, the sizes of the preimage sets of $g$ form the set $\{1, 2m, \ldots, 2m\}$.

\begin{theorem} \label{3:zdb}
  Let $m$ be an odd prime. Then the function $g$ defined above is a $(2^m-1, \frac{2^{m-1} + m - 1}{m}, 2m-1)$-ZDB function on $(\Z_n, +)$.
\end{theorem}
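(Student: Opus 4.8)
The plan is to adapt the proof of Theorem~\ref{2:zdb}, but now tracking both the ``$+$'' and the ``$-$'' contributions coming from the symmetrized classes $B\cup(-B)$. Since $|\im(g)|=|\Delta_m|=\frac{2^{m-1}+m-1}{m}$ has already been established, it suffices to show that for every $a\not\equiv 0\pmod{2^m-1}$ the number of $x$ with $g(x+a)=g(x)$ equals $2m-1$.

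First I would translate $g(x+a)=g(x)$ into a family of linear congruences. If $g(x+a)=g(x)$, then $x$ and $x+a$ lie in a common class $B\cup(-B)$; in particular $x\ne 0$, since $\{0\}$ forms its own class and $a\ne 0$. Writing a nonzero $2$-cyclotomic coset as $B=\{2^iy:0\le i\le m-1\}$ modulo $n$ and using that $2$ is a unit modulo $2^m-1$, one sees that $x+a\equiv\varepsilon\,2^j x\pmod{2^m-1}$ for some $(\varepsilon,j)\in\{+1,-1\}\times\{0,1,\ldots,m-1\}$, and conversely any such congruence with $x\ne 0$ forces $x$ and $x+a$ into the common class $B\cup(-B)$, where $B$ is the coset of $x$. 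Hence $g(x+a)=g(x)$ is equivalent to
\[
(\varepsilon\,2^j-1)\,x\equiv a\pmod{2^m-1}\quad\text{for some }(\varepsilon,j)\in\{+1,-1\}\times\{0,\ldots,m-1\}.
\]
The pair $(\varepsilon,j)=(+1,0)$ gives $0\equiv a$, which is impossible, leaving exactly $2m-1$ admissible pairs.

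Next I would show that each of these $2m-1$ congruences has the unique solution $x_{\varepsilon,j}:=(\varepsilon\,2^j-1)^{-1}a\ne 0$, i.e. that each coefficient is a unit modulo $2^m-1$. For $\varepsilon=+1$, $1\le j\le m-1$, this is exactly the computation in Theorem~\ref{2:zdb}: $\gcd(2^j-1,2^m-1)=2^{\gcd(j,m)}-1=1$ because $m$ is prime. For $\varepsilon=-1$ the coefficient is $-(2^j+1)$; the case $j=0$ gives $\mp 2$, a unit since $n$ is odd, while for $1\le j\le m-1$ one uses $2^j+1\mid 2^{2j}-1$ together with $\gcd(2^{2j}-1,2^m-1)=2^{\gcd(2j,m)}-1$ and $\gcd(2j,m)=\gcd(j,m)=1$ \emph{because $m$ is an odd prime}. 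This is the step that genuinely requires the oddness of $m$, and I expect it to be the main point needing care.

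Finally I would check that the $2m-1$ solutions $x_{\varepsilon,j}$ are pairwise distinct. If $x_{\varepsilon,j}=x_{\varepsilon',j'}$ for $(\varepsilon,j)\ne(\varepsilon',j')$, then $(\varepsilon\,2^j-\varepsilon'\,2^{j'})\,x_{\varepsilon,j}\equiv 0\pmod{2^m-1}$; since $x_{\varepsilon,j}\ne 0$, it is enough to see that $\varepsilon\,2^j-\varepsilon'\,2^{j'}$ is a unit. Pulling out the appropriate power of $2$, this reduces to the same facts as above, namely that $2^t-1$ and $2^t+1$ are units modulo $2^m-1$ for $1\le t\le m-1$ (and that $2^{j+1}$ is a unit in the case $j=j'$, $\varepsilon\ne\varepsilon'$). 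Hence no $x$ satisfies two of the congruences, the $2m-1$ solutions are distinct and each is counted once, so the total is $2m-1$ and $g$ is ZDB. The only real subtlety beyond Theorem~\ref{2:zdb} is the bookkeeping guaranteeing that the ``$+$'' family (with $j=0$ excluded) and the ``$-$'' family (with $j=0$ included) together contribute $2m-1$ distinct preimages with no overlap, and all the required coprimality facts rest on $m$ being an odd prime.
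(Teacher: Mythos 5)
Your proof is correct and follows essentially the same route as the paper: reduce $g(x+a)=g(x)$ to the $2m-1$ linear congruences $(\varepsilon 2^j-1)x\equiv a \pmod{2^m-1}$, show each coefficient is a unit via $\gcd(2^{2j}-1,2^m-1)=2^{\gcd(2j,m)}-1=1$ (using that $m$ is an odd prime, with the exceptional coefficient $\pm 2$ handled by the oddness of $n$), and count the pairwise distinct solutions. Your unified $(\varepsilon,j)$ bookkeeping and explicit pairwise-distinctness check differ only cosmetically from the paper's split into the two congruence families (with $m-1$ and $m$ solutions) plus a cross-family non-overlap argument, and are if anything slightly more thorough.
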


\begin{proof} Note that $|\im(g)|=|\Delta_m| = \frac{2^{m-1} + m - 1}{m}$. We only need to prove that for each $a \not \equiv 0 \pmod{2^m-1}$, the number of $x$ with $1 \le x \le 2^m-1$ such that $x+a$ belongs to the $2$-cyclotomic set that contains either $x$ or $-x$ is always $2m-1$. The existence of such an $x$ means that there is an integer $k$ with $1 \le k \le m - 1$, such that
\begin{eqnarray} \label{3:con1} x+a \equiv 2^kx \pmod{2^m-1}, \end{eqnarray}
or there is an integer $t$ with $1 \le t \le m $ such that
\begin{eqnarray} \label{3:con2} x+a \equiv -2^tx \pmod{2^m-1}. \end{eqnarray}
As for (\ref{3:con1}), similar to the proof of Theorem \ref{2:zdb}, the number of solutions for $x$ is $m-1$. As for (\ref{3:con2}), we get
\[(2^t+1)x \equiv -a \pmod{2^m-1}. \]
Notice that if $t<m$, since $m$ is an odd prime, we have
\[\gcd(2^{2t}-1,2^m-1)=2^{\gcd(2t,m)}-1=1, \]
and further $\gcd(2^t+1,2^m-1) = 1$.
If $t=m$, we have
\[\gcd(2^m+1,2^m-1)=\gcd(2,2^m-1)=1. \]
Hence, $2^t+1 $ is invertible modulo $2^m - 1$ for all $1 \le k \le m$. It then follows that the number of $x$ satisfying (\ref{3:con2}) is $m$. On the other hand, (\ref{3:con1}) and (\ref{3:con2}) can not be satisfied simultaneously, because otherwise we obtain for some $1 \le k \le m-1$, $1 \le t \le m$
\[(2^t+2^k) a \equiv 0 \pmod{2^m-1}.\]
However, we note that
\[\gcd(2^t+2^k,2^m-1)=1, \]
implying that $a \equiv 0 \pmod{2^m-1}$, which is a contradiction to the assumption $a \not\equiv 0 \pmod{2^m - 1}$. Then the total number of $x$ satisfying either (\ref{3:con1}) or (\ref{3:con2}) is $2m-1$. This completes the proof.
\end{proof}

\section{Two applications of the ZDB functions presented in this paper}\label{sec-app}

In this section, we deal with the applications of the ZDB functions of this paper in constant composition codes and difference systems of sets.

\subsection{Optimal constant composition codes}\label{sec-app-ccc}

Let $\F_\ell$ denote the set $\{0, 1, \ldots, \ell - 1\}$ (also called {\em alphabet}), and let $\F_\ell^n$ be the set of all $n$-tuples over $\F_\ell$ (also called {\em words}). An $(n, M, d, w)_\ell$ {\em constant weight code} (CWC) is a code $C \subset \F_\ell^n$ with size $M$ and minimum Hamming distance $d$ such that the Hamming weight of each codeword is $w$. An $(n, M, d, [w_0, w_1, \ldots, w_{\ell - 1}])_\ell$ {\em constant composition code} (CCC) is a code $C
\subset \F_\ell^n$ with size $M$ and minimum Hamming distance $d$ such that in every codeword the element $i$ appears exactly $w_i$ times for every $i \in \F_\ell$. An $(n, M, d, [w_0, w_1, \ldots, w_{\ell - 1}])_\ell$ CCC is called a {\em permutation code} if $n = \ell$ and $w_i = 1$ for all $i \in \F_\ell$. By definition, constant composition codes are a special class of constant weight codes and permutation codes are a further special class of constant composition codes.

Let $A_\ell (n, d, [w_0, w_1, \ldots, w_{\ell - 1}])$ denote the maximum size of an $(n, M, d, [w_0, w_1, \ldots, w_{\ell - 1}])_\ell$ CCC. The following upper bound on the maximum size of a CCC was derived in~\cite{Luo03}.

\begin{lemma}\label{lem-cccbound}
  If $nd - n^2 + (w_0^2 + w_1^2 + \cdots + w_{\ell - 1}^2) > 0 $,
  \begin{equation}\label{eqn-cccbound}
 A_\ell (n, d, [w_0, w_1, \ldots, w_{\ell - 1}]) \leq \frac{nd}{nd - n^2 + (w_0^2 + w_1^2 + \cdots w_{\ell - 1}^2)}.
  \end{equation}
\end{lemma}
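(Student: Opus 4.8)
The statement to prove is Lemma~\ref{lem-cccbound}, the Luo--Fu--Vinck--Chen bound on constant composition codes. Note that this is quoted from~\cite{Luo03} rather than proved in the paper, so a proof proposal here is a reconstruction of the standard argument.

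\medskip

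The plan is to prove the bound by a double-counting / second-moment argument on the pairwise distances among codewords, in the spirit of the Plotkin bound. Let $C = \{c_1, \ldots, c_M\}$ be an $(n, M, d, [w_0, \ldots, w_{\ell-1}])_\ell$ CCC. First I would consider the sum $S = \sum_{1 \le r < s \le M} d_H(c_r, c_s)$ of all pairwise Hamming distances. Since every pair is at distance at least $d$, we have the lower bound $S \ge \binom{M}{2} d$. The core of the argument is then an upper bound on $S$ obtained by summing over coordinate positions: for each coordinate $j \in \{1, \ldots, n\}$ and each symbol $i \in \F_\ell$, let $n_{j,i}$ be the number of codewords having symbol $i$ in position $j$, so that $\sum_{i} n_{j,i} = M$ for every $j$. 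The number of ordered pairs of codewords agreeing in position $j$ is $\sum_{i=0}^{\ell-1} n_{j,i}^2$, hence the number of unordered pairs differing in position $j$ is $\frac{1}{2}\bigl(M^2 - \sum_i n_{j,i}^2\bigr)$, and therefore $S = \frac{1}{2}\sum_{j=1}^{n}\bigl(M^2 - \sum_i n_{j,i}^2\bigr)$.

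\medskip

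The next step is to lower-bound $\sum_{j=1}^{n}\sum_{i=0}^{\ell-1} n_{j,i}^2$ using the constant-composition constraint. Here the key extra information compared with the ordinary Plotkin setting is that for each symbol $i$ the column counts satisfy $\sum_{j=1}^{n} n_{j,i} = M w_i$, because symbol $i$ occurs exactly $w_i$ times in each of the $M$ codewords. By Cauchy--Schwarz (or convexity of $t \mapsto t^2$) applied over the $n$ positions for each fixed $i$, $\sum_{j=1}^{n} n_{j,i}^2 \ge \frac{1}{n}\bigl(\sum_{j=1}^{n} n_{j,i}\bigr)^2 = \frac{M^2 w_i^2}{n}$. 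Summing over $i$ gives $\sum_{j,i} n_{j,i}^2 \ge \frac{M^2}{n}\sum_{i=0}^{\ell-1} w_i^2$, whence $S \le \frac{1}{2}\bigl(nM^2 - \frac{M^2}{n}\sum_i w_i^2\bigr)$. Combining with $S \ge \frac{M(M-1)}{2} d$ yields $M(M-1) d \le M^2\bigl(n - \frac{1}{n}\sum_i w_i^2\bigr)$, i.e. $(M-1) d \le M\bigl(n - \frac{1}{n}\sum_i w_i^2\bigr)$. Rearranging, $M\bigl(d - n + \frac{1}{n}\sum_i w_i^2\bigr) \le d$, and under the hypothesis $nd - n^2 + \sum_i w_i^2 > 0$ the coefficient of $M$ is positive, so we may divide to obtain $M \le \frac{d}{\,d - n + \frac{1}{n}\sum_i w_i^2\,} = \frac{nd}{nd - n^2 + \sum_i w_i^2}$, which is exactly \eqref{eqn-cccbound}. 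Since $C$ was an arbitrary such code, the bound holds for $A_\ell(n, d, [w_0, \ldots, w_{\ell-1}])$.

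\medskip

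I expect the only genuinely delicate point to be the bookkeeping: making sure the column-sum identities $\sum_i n_{j,i} = M$ and $\sum_j n_{j,i} = M w_i$ are used in the right places, and checking that the sign condition in the hypothesis is precisely what is needed to flip the inequality when dividing by $d - n + \frac{1}{n}\sum_i w_i^2$. Everything else is a routine Cauchy--Schwarz estimate. One could alternatively phrase the same computation via the inner-product/Gram-matrix of indicator vectors, but the pairwise-distance double count above is the cleanest route and generalizes the classical Plotkin bound transparently.
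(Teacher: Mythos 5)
Your reconstruction is correct: the double count of pairwise Hamming distances, the column identities $\sum_i n_{j,i}=M$ and $\sum_j n_{j,i}=Mw_i$, the Cauchy--Schwarz step, and the use of the positivity hypothesis to divide are all sound, and the trivial case $M=1$ is covered since $\sum_i w_i^2\le n^2$. Note that the paper offers no proof of this lemma (it is quoted directly from \cite{Luo03}), and your Plotkin-type argument is essentially the standard derivation given in that reference, so there is nothing to reconcile with the paper's text.
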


An $ (n, M, d, [w_0, w_1, \ldots, w_{\ell - 1}])_\ell$ constant composition code is said to be {\em optimal} if the bound of (\ref{eqn-cccbound}) is met. In~\cite{DY05,Ding08}, the link between ZDB functions and optimal CCCs was established, and PDFs and ZDB functions were used to construct optimal CCCs.

\begin{lemma}\label{lem-zdbccc}
  Suppose that $f$ is an $(n, \bar{\ell}, \lambda)$-ZDB function from an abelian group $(A, +)$ of order $n$ to an abelian group $(B, +)$ of order $\ell$ and $\im(f)$ is the image set of $f$ with $|\im(f)| = \bar{\ell}$. Let $A = \{a_0, a_1, \ldots, a_{n-1}\}$ and $\im(f) = \{b_0, b_1, \ldots, b_{\bar{\ell} - 1}\}$. Define $\tau_i = |\{ x \in A: f(x) = b_i \}|$ for $0 \leq i \leq \bar{\ell}-1$. Then the code
 $$
 \mathcal{C} = \{ ( f(a_0 + a_i), \ldots, f(a_{n-1} + a_i)): 0 \leq i \leq n-1 \}
 $$
 is an $(n, n, n - \lambda, [\tau_0, \tau_1, \ldots, \tau_{\bar{\ell} - 1}])_{\bar{\ell}}$ CCC over $\im(f)$ meeting the bound of (\ref{eqn-cccbound}).
\end{lemma}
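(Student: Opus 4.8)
The plan is to verify the three defining features of the claimed code in turn — that every codeword has the prescribed composition $[\tau_0,\ldots,\tau_{\bar\ell-1}]$, that the minimum Hamming distance is exactly $n-\lambda$, and that the resulting parameters satisfy the bound of \eqref{eqn-cccbound} with equality — and then invoke Lemma~\ref{lem-cccbound} to conclude optimality. First I would address the composition. Fix a codeword $c_i = (f(a_0+a_i),\ldots,f(a_{n-1}+a_i))$. As $j$ ranges over $\{0,\ldots,n-1\}$, the element $a_j+a_i$ ranges over all of $A$ (a translation is a bijection of $A$), so the multiset of entries of $c_i$ is exactly the multiset $\{f(x): x \in A\}$. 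Hence the symbol $b_t$ appears in $c_i$ exactly $|\{x \in A: f(x)=b_t\}| = \tau_t$ times, independently of $i$. This shows $\mathcal C$ is a constant composition code with composition $[\tau_0,\ldots,\tau_{\bar\ell-1}]$ over $\im(f)$, and in particular $\sum_t \tau_t = n$.

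Next I would compute the pairwise Hamming distances. For $i \ne i'$, the coordinates where $c_i$ and $c_{i'}$ agree are exactly those $j$ with $f(a_j+a_i) = f(a_j+a_{i'})$. Writing $x = a_j + a_{i'}$ and $a = a_i - a_{i'} \ne 0$, this is the condition $f(x+a) = f(x)$, and as $j$ runs over all indices, $x$ runs over all of $A$. By the ZDB property, the number of such $x$ is exactly $\lambda$. Therefore $d_H(c_i,c_{i'}) = n - \lambda$ for every pair $i \ne i'$; in particular all codewords are distinct (since $\lambda < n$, as $\lambda = |\{x : f(x+a)=f(x)\}|$ over a group of order $n$ with $a \ne 0$ forces $\lambda \le n$, and equality would make $f$ constant, contradicting $\bar\ell \ge 2$), so $|\mathcal C| = n$ and the minimum distance is $d = n-\lambda$. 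Thus $\mathcal C$ is an $(n,n,n-\lambda,[\tau_0,\ldots,\tau_{\bar\ell-1}])_{\bar\ell}$ CCC.

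Finally I would check that this meets the Luo--Fu--Vinck--Wei bound. Substituting $d = n-\lambda$ into the right-hand side of \eqref{eqn-cccbound}, the denominator becomes $nd - n^2 + \sum_t \tau_t^2 = n(n-\lambda) - n^2 + \sum_t \tau_t^2 = \sum_t \tau_t^2 - n\lambda$. So it remains to show $\sum_t \tau_t^2 - n\lambda = d = n - \lambda$, equivalently $\sum_t \tau_t^2 = n\lambda + n - \lambda = \lambda(n-1) + n$. This is a counting identity: count triples $(x,x',t)$ with $f(x)=f(x')=b_t$. On one hand this count is $\sum_t \tau_t^2$. On the other, split according to whether $x = x'$ (giving $n$ such triples, one for each $x$) or $x \ne x'$; in the latter case, for each fixed nonzero difference $a = x - x'$ the ZDB property gives exactly $\lambda$ choices of $x$, and there are $n-1$ nonzero values of $a$, contributing $\lambda(n-1)$. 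Hence $\sum_t \tau_t^2 = n + \lambda(n-1)$, which is precisely what is needed; one also checks this quantity is positive so that Lemma~\ref{lem-cccbound} applies. Therefore the bound is attained and $\mathcal C$ is optimal. I do not expect any genuine obstacle here — the only point requiring a moment's care is the bookkeeping in the final counting identity and confirming $\lambda < n$ so that the code has full size $n$ and positive minimum distance.
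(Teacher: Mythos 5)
Your proof is correct and complete: the translation argument gives the constant composition, the ZDB property gives pairwise distance exactly $n-\lambda$ (with the observation $\lambda<n$ ensuring $|\mathcal{C}|=n$), and the counting identity $\sum_t \tau_t^2 = n+\lambda(n-1)$ shows the bound of (\ref{eqn-cccbound}) is met with positive denominator. The paper itself states this lemma without proof, citing \cite{DY05,Ding08}, and your argument is essentially the standard one given in those references, so there is nothing to flag.
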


\begin{table}[ht]
  \caption{Some known optimal CCCs with parameters $(n,M,d, [w_0, w_1, \ldots, w_{\ell -1}])_\ell$
}\label{1:ccc}

\begin{center}{
\begin{tabular}{|c|c|c|}
  \hline
  Parameters & Constraints & References \\
  \hline
  \hline
  $\left(\frac{3^r - 1}{2}, \frac{3^r - 1}{2}, s, \left[ \frac{s-1}{2}, \frac{s - s^{1/2}}{2}, \frac{s + s^{1/2}}{2} \right] \right)_3$ & $r$ is odd, $s = 3^{r-1}$ & \cite{Yuan05} \\
  \hline
  $(q, q, q-r, [2r-1, 2, \ldots, 2])_r$ & $q \equiv 1 \pmod{4}$, $r = \frac{q+3}{4}$ & \cite{DY05} \\
  \hline
  $(q, q, q - s + 1, [s, \ldots, s, 1])_r$ & $q \equiv 1 \pmod{s}$, $r = \frac{q+s-1}{s}$ & \cite{DY05} \\
  \hline
   & $q \equiv 1 \pmod{2s}$ &  \\
   $(q, q, q - \frac{s-1}{2}, [s, \ldots, s, 1, \ldots, 1])_r$ &  $ r = \frac{q-1}{2s} + \frac{q+1}{2}$, &  \cite{DY05} \\
    &  $s$ appears $\frac{q-1}{2s}$ times &  \\
    \hline
    $(q(q+1), q^2, q^2, [q+1, \ldots, q+1])_q$ &    &  \cite{DY05} \\
    \hline
   $(q^{2r}, q^{2r}, (q-1)q^{2r-1}, [q^{2r-1} + (q-1)q^{r-1}, $  &   & \multirow{2}{*}{\cite{DY051}} \\
   $q^{2r-1} - q^{r-1}, \ldots, q^{2r-1} - q^{r-1}])_q$ &    &  \\
   \hline
   $\left( \frac{q^r - 1}{2}, \frac{q^r - 1}{2}, \frac{q^r - q^{r-1}}{2}, [\tau_0, \tau_1, \ldots, \tau_{q-1}]\right)_q $ &  $q$ is odd  &  \cite{DY051} \\
   \hline
   $(9s, 9^r, 6 \cdot 9^{r-1}, [5s, 2s, 2s])_3$ & $s = \frac{9^r - 1}{8}$ & \cite{Luo03} \\
   \hline
   $(8s, 8^r, 6 \cdot 8^{r-1}, [3s, 3s, 2s]_3$ & $s = \frac{8^r - 1}{7}$ & \cite{Luo03} \\
   \hline
   $(10s, 5^r, 7 \cdot 5^{r-1}, [6s, 2s, 2s])_3$ & $s = \frac{5^r - 1}{4}$ & \cite{Luo03} \\
   \hline
   $(qt, q^r, q^r, [t, \ldots, t])_q$ &  $t = \frac{q^r-1}{q-1}$ & \cite{Luo03} \\
   \hline
   $\left(qt, q^r, \frac{(q+3)q^{r-1}}{2}, \left[ \frac{(q-1)t}{2}, \frac{(q-1)t}{2}, t \right] \right)_3$ &  $t = \frac{q^r - 1}{q-1}$, $q$ is odd & \cite{Luo03} \\
   \hline
   $\left( \frac{q^r - 1}{N}, \frac{q^r - 1}{N}, \frac{q^r - q^{r-1}}{N}, [\tau_0, \tau_1, \ldots, \tau_{q-1}]\right)_q $ & $N| (q-1)$, $\gcd(N,r) = 1$ & \cite{Ding08} \\
   \hline
   $(q^2 + 1, q^2 + 1, q^2 - q, [\tau_0, \tau_1, \ldots, \tau_{q-1}])_q$ & $q = 2^s$, $s \ge 1$ & \cite{Ding08} \\
   \hline
   $(q-1, q-1, q - \frac{q-d}{d} - 1, [\frac{q}{d} -1, \frac{q}{d}, \ldots, \frac{q}{d}])_d$ & $d|q$ & \cite{DT12} \\
   \hline
   $(q^r - 1, q^r - 1, q^r - q^{r-s}, [\tau_0, \tau_1, \ldots, \tau_{q^s - 1}])_{q^s}$ & $1 \leq s \leq r$ & \cite{Zhou12} \\
   \hline
   $(t\frac{q^r-1}{N}, t\frac{q^r-1}{N}, t\frac{q^r - q^{r-s}}{N}, [\tau_0, \tau_1, \ldots, \tau_{q^s-1}])_{q^s}$ & $\begin{array}{c} N|(q-1),\  \gcd(N,r) = 1 \\
     1 \leq t \leq N,\ 1 \leq s \leq r \end{array}$ & \cite{Zhou12} \\
   \hline
   $(n, n, n - e + 1, [1, e,  \ldots, e])_{\frac{n + e - 1}{e}}$    &  $\begin{array}{c} n=p_1^{m_1}p_2^{m_2} \cdots p_k^{m_k}, 2< p_1<p_2<\cdots<p_k\\
  \mbox{ and } e|(p_i-1) \mbox{ for } 1 \le i \le k
\end{array}$ & \cite{Cai13} \\
   \hline
   $(n, n, n - e + 1, [1, e,  \ldots, e])_{\frac{n + e - 1}{e}}$    &  $\begin{array}{c} n = p_1^{m_1}p_2^{m_2} \cdots p_k^{m_k}, p_1<p_2<\cdots<p_k\\
  \mbox{ and } e|(p_i^{m_i}-1) \mbox{ for } 1 \le i \le k
\end{array}$ & Theorem~\ref{1:main} \\
  \hline
  $(2^m-1, 2^m - 1, 2^m - m, [ 1, m, \ldots, m ])_{\frac{2^m + m - 2}{m}}$    &   $m$ is a prime & Theorem~\ref{2:zdb} \\
  \hline
  $(2^m-1, 2^m -1, 2^m - 2m, [1, 2m, \ldots, 2m])_{\frac{2^{m-1}+m-1}{m}} $  &   $m$ is an odd prime & Theorem~\ref{3:zdb} \\
  \hline
\end{tabular}}
\end{center}
\end{table}

We remark that every ZDB function corresponds to an optimal CCC using this standard method in Lemma~\ref{lem-zdbccc}. In Table~\ref{1:ccc} we summarize some known optimal CCCs with parameters $(n, M, d, [w_0, w_1, \ldots, w_{\ell - 1}])_\ell$, including the new parameters of the CCCs obtained from the three new families of ZDB functions of this paper.

\subsection{Optimal and perfect difference systems of sets}\label{sec-app-dss}

Difference systems of sets (DSS) were introduced by Levenstein~\cite{Lev71} (see also~\cite{Lev04}) for the construction of comma-free codes for synchronization. Let $n$ be a positive integer, and let $\Z_n$ be the integer ring modulo $n$. An $(n, \{\tau_0, \tau_1, \ldots, \tau_{\ell-1}\}, \rho)$ {\em difference system of set} (DSS) is a collection of $\ell$ disjoint sets $D_i \subseteq \Z_n$ such that $|D_i| = \tau_i$ for all $0 \leq i < \ell $ and the multiset
\begin{equation}\label{eqn-dssmulti}
\{*  ( b - b') \bmod{n} : b \in D_i,\ b' \in D_j, \ i \ne j,, \ 0 \leq i, j \leq \ell -1 *\}
\end{equation}
contains every nonzero element $x \in \Z_n$ at least $\rho$ times. A DSS is called {\em perfect} if every nonzero element $x \in \Z_n$ is contained exactly $\rho$ times in the multiset of (\ref{eqn-dssmulti}). A DSS is said {\em regular} if all the subsets $D_i$'s are of the same size.

For the application of DSS to code synchronization, the number
$$
r_\ell (n, \rho) = \sum_{i=0}^{\ell - 1} | D_i |
$$
is required to be as small as possible. A lower bound on $r_\ell (n, \rho)$ is the following~\cite{Wang06}.

\begin{lemma}\label{lem-dss}
  For any DSS with parameters $(n, \{\tau_0, \tau_1, \ldots, \tau_{\ell-1}\}, \rho)$,
  \begin{equation}\label{eqn-dssbound}
    r_\ell (n, \rho) \geq \sqrt{\sq\left( \rho(n-1) + \left\lceil \frac{\rho (n-1)}{\ell - 1} \right\rceil \right)} ,
  \end{equation}
where $\sq(x)$ denotes the smallest square number that is no less than the positive integer $x$, and $\lceil x \rceil$ denotes the ceiling function.
\end{lemma}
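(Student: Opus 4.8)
\textit{Proof proposal.} The plan is to bound $r = r_\ell(n,\rho) = \sum_{i=0}^{\ell-1}\tau_i$ by counting the difference multiset in (\ref{eqn-dssmulti}) in two ways, in the spirit of the classical Levenshtein-type arguments. We may assume $\ell \ge 2$, since for $\ell = 1$ the multiset in (\ref{eqn-dssmulti}) is empty and there is nothing to prove. First I would observe that the total number of (ordered) terms in the multiset of (\ref{eqn-dssmulti}) is
$$\sum_{i \ne j}\tau_i\tau_j \;=\; \Bigl(\sum_{i=0}^{\ell-1}\tau_i\Bigr)^2 - \sum_{i=0}^{\ell-1}\tau_i^2 \;=\; r^2 - \sum_{i=0}^{\ell-1}\tau_i^2 .$$
Since the $D_i$ are pairwise disjoint subsets of $\Z_n$, every term $b-b'$ with $b \in D_i$, $b' \in D_j$, $i \ne j$, is a \emph{nonzero} element of $\Z_n$; as each of the $n-1$ nonzero elements occurs at least $\rho$ times in the multiset, this forces
$$r^2 - \sum_{i=0}^{\ell-1}\tau_i^2 \;\ge\; \rho(n-1).$$

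Next I would bound $\sum_i \tau_i^2$ from below by Cauchy--Schwarz (equivalently, the power-mean inequality): $r^2 = \bigl(\sum_{i=0}^{\ell-1}\tau_i\cdot 1\bigr)^2 \le \ell \sum_{i=0}^{\ell-1}\tau_i^2$, so $\sum_i \tau_i^2 \ge r^2/\ell$ (empty sets $D_i$ cause no trouble here, contributing $0$ to both sides). Substituting $\sum_i \tau_i^2 \le r^2 - \rho(n-1)$ from the first step into $r^2/\ell \le \sum_i \tau_i^2$ gives $r^2/\ell \le r^2 - \rho(n-1)$, i.e.
$$r^2 \;\ge\; \frac{\ell}{\ell-1}\,\rho(n-1) \;=\; \rho(n-1) + \frac{\rho(n-1)}{\ell-1}.$$

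Finally I would upgrade this real inequality to the integer statement of (\ref{eqn-dssbound}). Because $r$ is a positive integer, $r^2 \in \Z$; and because $\rho(n-1) \in \Z$ we have $\bigl\lceil \rho(n-1) + \rho(n-1)/(\ell-1)\bigr\rceil = \rho(n-1) + \bigl\lceil \rho(n-1)/(\ell-1)\bigr\rceil$, so $r^2 \ge \rho(n-1) + \lceil \rho(n-1)/(\ell-1)\rceil$. But $r^2$ is in fact a perfect square, hence it is at least the smallest square number no less than $\rho(n-1) + \lceil \rho(n-1)/(\ell-1)\rceil$; that is, $r^2 \ge \sq\bigl(\rho(n-1) + \lceil \rho(n-1)/(\ell-1)\rceil\bigr)$, and taking square roots yields (\ref{eqn-dssbound}). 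The computation is essentially routine; the only places calling for a little care are the clean passage from the real bound to the perfect-square bound (the interplay of $\lceil\cdot\rceil$ with $\sq(\cdot)$, using that $\rho(n-1)$ is an integer) and the verification that the disjointness of the $D_i$ genuinely makes every listed difference nonzero modulo $n$, which is what legitimizes the factor $n-1$.
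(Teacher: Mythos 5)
Your proposal is correct: the double count of the multiset in (\ref{eqn-dssmulti}) (disjointness making every listed difference nonzero, hence $r^2-\sum_i\tau_i^2\ge\rho(n-1)$), the Cauchy--Schwarz step $\sum_i\tau_i^2\ge r^2/\ell$, and the resulting real bound $r^2\ge\rho(n-1)+\rho(n-1)/(\ell-1)$ are all sound, as is the passage to the stated form via the integrality of $r^2$ and of $\rho(n-1)$ and the fact that $r^2$ is a perfect square; the implicit assumption $\ell\ge 2$ is harmless since the bound is vacuous otherwise. There is nothing in the paper to compare against: the lemma is quoted from the reference \cite{Wang06} and no proof is given here, so your argument stands as a self-contained derivation. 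It is essentially Levenshtein's classical counting bound $r^2\ge\frac{\ell}{\ell-1}\rho(n-1)$ sharpened by the rounding observation $\lceil\rho(n-1)+x\rceil=\rho(n-1)+\lceil x\rceil$ and by replacing the resulting integer with $\sq(\cdot)$, which is exactly what the cited bound amounts to; the two points you flag as needing care (the interplay of $\lceil\cdot\rceil$ with $\sq(\cdot)$, and disjointness guaranteeing nonzero differences) are indeed the only delicate ones, and you handle both correctly.
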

A perfect $(n, \{\tau_0, \tau_1, \ldots, \tau_{\ell-1}\}, \rho)$ DSS is called {\em optimal} if the bound of (\ref{eqn-dssbound}) is met. The correspondence between ZDB functions and perfect DSSs was first established in~\cite{Ding09} (see also~\cite{Zhou12}).

\begin{lemma}\label{lem-zdbdss}
  Suppose that $f$ is an $(n, \bar{\ell}, \lambda)$-ZDB function from $(\Z_n, +)$ to an abelian group $(B, +)$ of order $\ell$ and $\im(f)$ is the image set of $f$ with $|\im(f)| = \bar{\ell}$. Let $\im(f) = \{b_0, b_1, \ldots, b_{\bar{\ell} - 1}\}$. Define $D_i = \{ x \in \Z_n : f(x) = b_i\}$, and $\tau_i = |D_i|$ for $0 \leq i \leq \bar{\ell} - 1$. Then the set
  $$
  \mathcal{D} =  \{ D_i : 0 \leq i \leq \bar{\ell} - 1\}
  $$
  is an $(n, \{\tau_0, \tau_1, \ldots, \tau_{\bar{\ell} - 1}\}, n - \lambda)$ perfect DSS. Furthermore, if $\bar{\ell} \lambda \leq n$, $\mathcal{D}$ is optimal with respect to the bound of (\ref{eqn-dssbound}).
\end{lemma}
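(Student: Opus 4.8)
The plan is to prove the lemma's two assertions in turn: first that $\mathcal{D}$ is a perfect $(n,\{\tau_0,\ldots,\tau_{\bar{\ell}-1}\},\,n-\lambda)$ DSS, and then, under the extra hypothesis $\bar{\ell}\lambda\le n$, that it attains equality in the bound~(\ref{eqn-dssbound}). Throughout I would write $r=\tau_0+\cdots+\tau_{\bar{\ell}-1}$; since the $D_i$ are the preimages of the distinct values $b_0,\ldots,b_{\bar{\ell}-1}$ of $f$, they are pairwise disjoint and partition $\Z_n$, so $r=n$. I would also record at the outset the harmless standing assumptions $n\ge 2$ and $\bar{\ell}\ge 2$, the latter needed only so that the denominator $\bar{\ell}-1$ in~(\ref{eqn-dssbound}) makes sense.

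For the perfectness, I would fix a nonzero $c\in\Z_n$ and count the ordered pairs $(b,b')\in\Z_n\times\Z_n$ with $b-b'=c$ in two ways. There are exactly $n$ such pairs, one for each choice of $b'$ (with $b=b'+c\ne b'$). On the other hand, such a pair has $b$ and $b'$ in a common set $D_i$ exactly when $f(b'+c)=f(b')$, and by the defining property of a ZDB function (with $a=c$) the number of $b'$ with $f(b'+c)=f(b')$ equals $\lambda$. Subtracting, the number of pairs $(b,b')$ with $b-b'=c$, $b\in D_i$, $b'\in D_j$ and $i\ne j$ equals $n-\lambda$; but that number is precisely the multiplicity of $c$ in the multiset~(\ref{eqn-dssmulti}). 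Hence every nonzero element of $\Z_n$ occurs there exactly $n-\lambda$ times, which is exactly the statement that $\mathcal{D}$ is a perfect DSS with $\rho=n-\lambda$.

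For the optimality, the bound~(\ref{eqn-dssbound}) with $\ell=\bar{\ell}$ and $\rho=n-\lambda$ reads $r\ge\sqrt{\sq(X)}$ with $X=(n-\lambda)(n-1)+\lceil(n-\lambda)(n-1)/(\bar{\ell}-1)\rceil$, so, since $r=n$, it suffices to show $\sq(X)=n^2$. Counting the total number of terms of the multiset~(\ref{eqn-dssmulti}) a second way gives $\sum_{i\ne j}\tau_i\tau_j=(n-\lambda)(n-1)$, and therefore $\sum_i\tau_i^2=r^2-\sum_{i\ne j}\tau_i\tau_j=n^2-(n-\lambda)(n-1)$. Two elementary estimates then finish the job. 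First, Cauchy--Schwarz gives $\bar{\ell}\sum_i\tau_i^2\ge(\sum_i\tau_i)^2=n^2$, so $n^2-(n-\lambda)(n-1)\ge n^2/\bar{\ell}$; rearranging yields $n^2-(n-\lambda)(n-1)\ge(n-\lambda)(n-1)/(\bar{\ell}-1)$, and since the left side is an integer this gives $n^2-(n-\lambda)(n-1)\ge\lceil(n-\lambda)(n-1)/(\bar{\ell}-1)\rceil$, i.e.\ $n^2\ge X$; as $n^2$ is itself a square that is $\ge X$, this forces $\sq(X)\le n^2$. Second, $X\ge(n-\lambda)(n-1)\cdot\bar{\ell}/(\bar{\ell}-1)$, and clearing denominators shows that $(n-\lambda)(n-1)\cdot\bar{\ell}/(\bar{\ell}-1)>(n-1)^2$ is equivalent to $\lambda\bar{\ell}<n+\bar{\ell}-1$, which follows immediately from $\lambda\bar{\ell}\le n$ and $\bar{\ell}\ge 2$; hence $X>(n-1)^2$, so the smallest square that is $\ge X$ is at least $n^2$, i.e.\ $\sq(X)\ge n^2$. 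Combining the two estimates gives $\sq(X)=n^2$, so $\mathcal{D}$ meets the bound and is optimal.

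I expect the perfectness part to be entirely routine (just the double count above together with the ZDB definition), and likewise the identity $\sum_{i\ne j}\tau_i\tau_j=(n-\lambda)(n-1)$ and the Cauchy--Schwarz step to present no difficulty. The one place that needs genuine care, and the only place where the hypothesis $\bar{\ell}\lambda\le n$ is actually used, is the second estimate in the optimality argument: one must verify that the ceiling in the definition of $X$ does not spoil the strict inequality $X>(n-1)^2$, and it is precisely this strict inequality that pins $\sq(X)$ down to $n^2$ rather than a smaller square. I would regard this bit of bookkeeping with the ceiling function and the boundary condition as the main, albeit modest, obstacle; everything else follows directly from the ZDB definition and the double counting of the difference multiset.
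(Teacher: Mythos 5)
Your proof is correct. Note that the paper itself does not prove this lemma but quotes it from \cite{Ding09} (see also \cite{Zhou12}); your argument is essentially the standard one behind that result: the double count of pairs $(b,b')$ with $b-b'=c$ together with the ZDB property gives the perfectness with $\rho=n-\lambda$, and your optimality verification is sound --- the identity $\sum_{i\ne j}\tau_i\tau_j=(n-\lambda)(n-1)$, the Cauchy--Schwarz step showing $X\le n^2$ (which indeed needs no hypothesis beyond the ZDB structure), and the strict inequality $X>(n-1)^2$, which is exactly where $\bar{\ell}\lambda\le n$ (with $\bar{\ell}\ge 2$) enters, correctly pin down $\sq(X)=n^2$ and hence equality in (\ref{eqn-dssbound}).
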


\begin{table}[ht]
  \caption{Some known optimal and perfect DSSs with parameters $(n, \{\tau_0, \tau_1, \ldots, \tau_{\ell -1}\}, \rho)$
}\label{1:dss}

\begin{center}{
\begin{tabular}{|c|c|c|}
  \hline
  Parameters & Constraints & References \\
  \hline
  \hline
  $\left( \frac{q^r - 1}{N}, \{\tau_0, \tau_1, \ldots, \tau_{q-1}\}, \frac{q^r - q^{r-1}}{N} \right) $ & $N| (q-1)$, $\gcd(N,r) = 1$ & \cite{Ding08} \\
   \hline
   $(q^2 + 1, \{ \tau_0, \tau_1, \ldots, \tau_{q-1} \}, q^2 - q )$ & $q = 2^s$, $s \ge 1$ & \cite{Ding08} \\
   \hline
   $(p^2, \{2p-1, p-1, \ldots, p-1\}, p^2 - p)$ & $p$ is a prime & \cite{Ding09} \\
   \hline
   $(q-1,   \{\frac{q}{d} -1, \frac{q}{d}, \ldots, \frac{q}{d}\}, q - \frac{q-d}{d} - 1)$ & $d|q$ & \cite{DT12} \\
   \hline
   $(q^r - 1, \{\tau_0, \tau_1, \ldots, \tau_{q^s - 1}\}, q^r - q^{r-s})$ & $1 \leq s \leq r$ & \cite{Zhou12} \\
   \hline
   $(t\frac{q^r-1}{N}, \{\tau_0, \tau_1, \ldots, \tau_{q^s-1}\}, t\frac{q^r - q^{r-s}}{N})$ & $\begin{array}{c} N|(q-1),\  \gcd(N,r) = 1 \\ 1 \leq t \leq N,\ 1 \leq s \leq r\end{array}$ & \cite{Zhou12} \\
   \hline
   $(n, \{1, e,  \ldots, e\}, n - e + 1 )$    &  $\begin{array}{c} n=p_1^{m_1}p_2^{m_2} \cdots p_k^{m_k}, 2< p_1<p_2<\cdots<p_k\\
  \mbox{ and } e|(p_i-1) \mbox{ for } 1 \le i \le k, n \ge (e-1)^2
\end{array}$ & \cite{Cai13} \\
  \hline
  $(2^m-1, \{1, m, \ldots, m \}, 2^m - m )$    &   $m$ is a prime & Theorem~\ref{2:zdb} \\
  \hline
  $(2^m-1, \{1, 2m, \ldots, 2m\}, 2^m - 2m )$  &   $m$ is an odd prime, $m \ge 11$ & Theorem~\ref{3:zdb} \\
  \hline
\end{tabular}}
\end{center}
\end{table}

We emphasize that the DSSs constructed from ZDB functions using Lemma~\ref{lem-zdbdss} may not be optimal unless the condition $\bar{\ell} \lambda \leq n$ is satisfied. It is easy to check that the DSSs given by the ZDB functions in Theorem~\ref{2:zdb} are optimal for all primes $m$, and those constructed by the ZDB functions in Theorem~\ref{3:zdb} achieve optimality for primes $m \ge 11$.

Note that the group $(\gf(p_1^{m_1}) \times \cdots \times \gf(p_k^{m_k}), +)$ is cyclic only when $m_1=\cdots=m_k=1$, that is, $n=p_1 \cdots p_k$ is square-free. The ZDB functions in Theorem~\ref{1:main} can be employed to construct DSSs only when $m_1=\cdots=m_k=1$.

In Table~\ref{1:dss}, we summarize the parameters of some known optimal and perfect DSSs, including the parameters of the DSSs obtained from the two families of ZDB functions in Section~\ref{sec-2cyclo}.

\section{Concluding remarks}\label{sec-con}

In this paper, we present three new families of ZDB functions with parameters $\{n,\{\tau_0,\ldots,\tau_{\bar{\ell}-1}\},\bar{\ell},\lambda\}$. The parameters of optimal constant composition codes, optimal and perfect difference systems of sets obtained from these new families of ZDB functions are also summarized.

As we have seen, with respect to applications in constant composition codes and difference systems of sets, every parameter in the set
$\{n,\{\tau_0,\ldots,\tau_{\bar{\ell}-1}\},\bar{\ell},\lambda\}$ makes a difference. Hence, when comparing the parameters
of two ZDB functions, it may be more appropriate to compare not only $n, \bar{\ell}, \lambda$, but also $\tau_0, \tau_1, \cdots, \tau_{\bar{\ell}-1}$ as well. Therefore, the parameters of a ZDB function shall not be considered new only when all of the parameters $\{n,\{\tau_0,\ldots,\tau_{\bar{\ell}-1}\},\bar{\ell},\lambda\}$ of this ZDB function can be obtained by an earlier constructed ZDB function.

\section*{Acknowledgments}


Cunsheng Ding's and Maosheng Xiong's researches are supported by the Hong Kong Research Grants Council under Grant (Nos. 600812, Nos. 606211 and SBI12SC05), respectively.



\end{document}